\newtheorem{theorem}{Theorem}
\newtheorem{remark}{Remark}
\newtheorem{corollary}{Corollary}
\tikzstyle{decision} = [diamond, draw, fill=blue!20, 
\tikzstyle{block} = [rectangle, draw, fill=blue!20,  text centered, rounded corners, minimum height=4em]
\tikzstyle{line} = [draw, -latex']
\tikzstyle{cloud} = [draw, ellipse,fill=red!20, node distance=6.6cm,
\tikzstyle{algorithm} = [rectangle, draw, fill=green!20,  text centered, rounded corners, minimum height=4em, minimum width =6em]
\tikzstyle{initialization} = [rectangle, draw,   text centered, minimum height=4em, minimum width =6em]
\def\BibTeX{{\rm B\kern-.05em{\sc i\kern-.025em b}\kern-.08em
    T\kern-.1667em\lower.7ex\hbox{E}\kern-.125emX}}
    \tikzstyle{block}=[draw, rectangle, minimum height=1cm, text width=1.5cm, text centered, draw=darkgray, font=\small]
\tikzstyle{block_medium}=[draw, rectangle, minimum height=1.5cm, text width=2cm, text centered, draw=darkgray, font=\small]
\tikzstyle{block_large}=[draw, rectangle, minimum height=2cm, text width=2cm, text centered, draw=darkgray, font=\small]
\tikzstyle{line} = [draw, -latex]
\newtheorem{Example}{Example}
\newcommand{\y}{\mathbf{y}}
\newcommand{\x}{\mathbf{x}}
\renewcommand{\a}{\mathbf{a}}
\renewcommand{\u}{\mathbf{u}}
\renewcommand{\v}{\mathbf{v}}
\title{Correcting one error in channels with feedback
\thanks{
Ilya Vorobyev and Christian Deppe are with Institute of Communications Engineering, 
Technical University of Munich, 
Munich, Germany. 
(email: ilya.vorobyev@tum.ru, christian.deppe@tum.de)

Alexey Lebedev and Vladimir Lebedev are with Kharkevich Institute for Information Transmission Problems, Moscow, Russia. (email: al\_lebed95@mail.ru, lebedev37@mail.ru)}
}
\author{Ilya Vorobyev, Alexey Lebedev, Vladimir Lebedev, Christian Deppe
}
\begin{document}

\maketitle

\begin{abstract}
We address the problem of correcting a single error in an arbitrary discrete
memoryless channel with error-free instantaneous feedback. For the case of a one-time
feedback, we propose a method for constructing optimal transmission strategies. The
obtained result allows us to prove that for a binary channel, two feedbacks are
sufficient to transmit the same number of messages as in the case of complete
feedback. We also apply the developed techniques to a~binary asymmetric channel to
construct transmission strategies for small lengths.

\end{abstract}

\section{Introduction}

We analyze the problem of correcting a single error in an arbitrary discrete
memoryless channel with instantaneous error-free feedback. In what follows we always
assume all these conditions---memoryless channel and error-free instantaneous
feedback---to be fulfilled. The most attention is paid to binary symmetric and
asymmetric channels. In a binary symmetric channel, any symbol can be transmitted
erroneously, for example, 0 instead of 1, or vice versa. The word \emph{symmetric\/}
is usually omitted, and such a channel is simply referred to as a binary channel. In
a binary asymmetric channel, 0 can be received instead of a transmitted symbol 1, but
the symbol 0 is always transmitted without errors. We consider a combinatorial model
of such a channel with feedback and a single transmission error.

It is known that the problem of correcting $t$ errors in a binary channel with
complete feedback is equivalent to the following combinatorial search problem. It is
required to find an element $x\in\mathcal{M}$ using $n$ questions of the following
type: ``Does an element $x$ belong to a subset~$A$ of a~set~$\mathcal{M}$?''
Questions are asked in succession, i.e., each next question may depend on answers to
the preceding ones. The opponent who answers the questions knows $x$ and is allowed
to lie at most $t$ times. This problem was first formulated by
R\'enyi~\cite{renyi61}. For a linear number of errors in a binary channel with
complete feedback, the optimal transmission rate was computed by Berlekamp~\cite{B68}
and Zigangirov~\cite{zigangirov1976}. This problem became popular after Ulam in his
biography \cite{ulam1991adventures} asked a similar question for $M=10^6$. Optimal
strategies for all $M$ have been found in \cite{P87} for $t=1$, in \cite{G90} for
$t=2$, and in \cite{D00} for $t=3$. Tables of optimal strategies for various values
of $t$ and $M\le 2^{20}$ are presented~in~\cite{D02}.

Error correction in a binary asymmetric channel with complete feedback is equivalent
to a~version of Ulam's problem with halflie first described in
\cite{rivest1980coping}. The difference from the original problem is that lying is
allowed only when the true answer is affirmative. A good survey of results on this
problem can be found in \cite{Cicalese13}. For a fixed number $t$ of errors, the
maximum cardinality of $\mathcal{M}$ is asymptotically equivalent to
$2^{n+t}{\bigm/}\smash[b]{\binom{n}{t}}$. For $t=1$, this was proved
in~\cite{cicalese2000optimal}, and for an arbitrary~$t$,
in~\cite{dumitriu2004halfliar,spencer2003halflie}.\strut

Note that for a fixed number of errors, even one-time feedback is sufficient to
transmit asymptotically the same number of messages as in the case of complete
feedback. For a nonbinary symmetric channel, this was proved in \cite{bassalygo2005},
and for an arbitrary discrete channel, in \cite{dumitriu2005two}.

A key result of the present paper is the description of optimal strategies with
one-time feedback and a single error for an arbitrary discrete channel. The developed
technique is applied to construct single-error-correcting transmission strategies for
a binary channel with one- or two-time feedback, and also to construct
single-error-correcting strategies in a binary asymmetric channel with one-time
feedback. The most interesting of the obtained results, in our opinion, is
constructing a~strategy with two feedbacks that corrects a single error in a binary
channel and transmits as many messages as a completely adaptive strategy.

The rest of the paper is organized as follows. In Section~\ref{sec::definitions} we
give basic definitions. In Section~\ref{sec::one feedback} we formulate and prove a
theorem describing the structure of optimal strategies with a single error and
one-time feedback. In Section~\ref{sec::bsc with feedback} the main theorem is
applied to construct a single-error-correcting transmission strategy in a binary
channel with two feedbacks that allows to transmit as many messages as in the case of
complete feedback. In the last section, the developed technique is applied to find
good strategies for a binary asymmetric channel with a single error and one-time
feedback.

\section{Basic Definitions}\label{sec::definitions}

Consider a channel with $q$-ary input alphabet $\mathcal{X}=\{0,\ldots, q-1\}$ and
output alphabet $\mathcal{Y}=\mathcal{X}$. The encoder transmits a message
$\x\in\mathcal{X}^n$, and the decoder receives a message
$\y\in\mathcal{Y}^n$. The prefix of length $p$ of a vector $\y$ will be
denoted by $\y_{\overline{p}}$. By an error we mean replacing a symbol~$q_1$ of
a sequence~$\x$ by a symbol $q_2$, $q_1\ne q_2$. We define a bipartite graph $G$
with the left-hand part corresponding to elements from $\mathcal{X}$, and the
right-hand part, to elements from $\mathcal{Y}$. We connect $q_1\in\mathcal{X}$ and
$q_2\in\mathcal{Y}$, $q_1\ne q_2$, by an edge if an error may change the symbol $q_1$
to $q_2$. An example of such a graph for a one-way ternary channel is shown in
Fig.~\ref{fig2}.

\begin{figure}[h]
\centering\vskip3pt
\begin{tikzpicture}
\node (A) at (0,0) {0}; \node (B) at (2,0) {0}; \node (C) at (0,-2) {1}; \node (D) at
(2,-2) {1}; \node (E) at (0,-4) {2}; \node (F) at (2,-4) {2}; \path[line] (C) -- (B);
\path[line] (E) -- (B); \path[line] (E) -- (D);
\end{tikzpicture}
\caption{Error graph for a one-way ternary channel.}\label{fig2}
\bigskip
\end{figure}
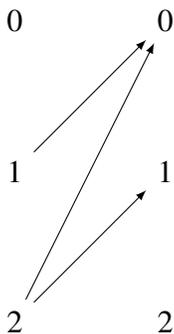

In this paper we consider transmission over a channel with $k$ feedbacks. Let the
codeword length~$n$ be divided into $k+1$ parts:
$$
n=n_1+n_2+\ldots+n_{k+1}.
$$
The encoder transmits a message $m\in [M]$. The first $n_1$ transmitted symbols
$x_1,\ldots, x_{n_1}$ depend on the message $m$ only. After $N_{i-1}:=n_1+\ldots
+n_{i-1}$, $i\ge 2$, symbols are transmitted, the encoder has values of the received
symbols $\y^{}_{\overline{N_{i-1}\!}}$ from the feedback channel. The encoder
sends the $i$th block of $n_i$ symbols, which is a function of the message $m$ and of
the symbols $\y^{}_{\overline{N_{i-1}\!}}$ received by the encoder. The case of
$k=0$ corresponds to a channel without feedback, and the case of $k=n-1$, to a
channel with complete feedback.

We define the cloud $B_t(m)$ (or $B(m)$ for $t=1$) for a message $m$ to be the set of
sequences $\y$ that can be obtained at the output of the channel with at
most~$t$ errors during the transmission of this message. We refer to the collection
of disjoint clouds $B_t(m)$, $m\in [M]$, as a $t$-error-correcting
code~$\mathcal{C}$. Points of the space that do not belong to any cloud will be
referred to as free points and will be denoted by~$\mathcal{F}(\mathcal{C})$. Codes
that do not use feedback will be called nonadaptive.

Note that for a symmetric channel without feedback, clouds are spheres of radius $t$
in the Hamming metric. For a symmetric channel, sizes of all clouds are the same, but
for an arbitrary error graph this is not the case. For constructions proposed in the
present paper, it makes sense to find codes with the maximum number of free points
for each length and each cardinality. Such codes will be called $F$-optimal.

As an example, let us describe the structure of clouds for a binary channel with a
single error and complete feedback. Every cloud $B(m)$ contains a sequence $\y$
which will be transmitted if there are no errors in the channel. We call it a root
sequence. For any coordinate $i$, the cloud contains a~sequence $\y(i)$ which
coincides with $\y$ in the first $i-1$ positions, differs from it in the $i$th
position, and has arbitrary symbols in all other positions. Hence it is seen that
each cloud consists of at~least $n+1$ sequences. In particular, this yields the
Hamming bound on the maximum number of transmitted messages.

\section{One-Time Feedback}\label{sec::one feedback}

In this section we propose a transmission strategy for the case of a single error and
one-time feedback. We divide the codeword length $n$ into two parts, $n_1$ and $n_2$,
with $n=n_1+n_2$. We define a bipartite graph $H=(U\sqcup V, E)$ as follows. The
left- and right-hand parts consist of $q^{n_1}$ vertices corresponding to the sets of
input and output sequences. Vertices $u$ and $v$ are connected by an edge if the
sequence corresponding to $v$ can be obtained from the sequence corresponding to~$u$
as a~result of a single error. Note that we do not connect vertices corresponding to
identical sequences (this corresponds to the case of no error).

\begin{theorem}\label{th::main}
Let a graph $H=(U\sqcup V, E)$ be given. A strategy allowing to transmit
\begin{equation}\label{number of transmitted words}
M=\sum\limits_{u\in U}M(u)
\end{equation}
messages exists if and only if there exists a family of single-error-correcting codes
$C(u)$ of length~$n_2$ and cardinality $M(u)$ with $F(u)$ free points that satisfy
the condition
\begin{equation}\label{constraints}
\sum\limits_{u:\: (u,v)\in E} M(u)\le F(v)
\end{equation}
for any $v\in V$.
\end{theorem}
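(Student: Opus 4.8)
The plan is to read off, from any one‑feedback strategy, the combinatorial data it is assembled from, and to recognise~\eqref{constraints} as the precise compatibility condition linking the two parts of the transmission. A strategy is given by a map $m\mapsto u(m)\in U$ (the first block, depending on $m$ alone) together with second‑block words $x^{(2)}(m,v)\in\mathcal{Y}^{n_2}$ depending on $m$ and on the prefix $v$ fed back to the encoder. Set $S_u=\{m:u(m)=u\}$ and $M(u)=|S_u|$, so that $M=\sum_{u\in U}M(u)$ as in~\eqref{number of transmitted words}. Since at most one error occurs, if $m$ is transmitted then the received prefix $v$ either equals $u(m)$ (no error in block~1, so at most one error remains for block~2) or is an $H$‑neighbour of $u(m)$ (the single error fell in block~1, so block~2 is received cleanly). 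Hence
\[
B(m)=\bigl(\{u(m)\}\times D(m)\bigr)\ \cup\ \bigcup_{v:\,(u(m),v)\in E}\bigl\{\,(v,\,x^{(2)}(m,v))\,\bigr\},
\]
where $D(m)\subseteq\mathcal{Y}^{n_2}$ is the set of second‑block outputs consistent with $m$ when the prefix $u(m)$ has been received correctly (so block~2 may still carry one error).

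I would first prove the forward implication. Fix $u\in U$ and let $C(u)$ be the length‑$n_2$ code whose clouds are the sets $D(m)$, $m\in S_u$. If two of them shared a point $w$, then $(u,w)$ would lie in both $B(m_1)$ and $B(m_2)$---impossible since the clouds of a code are disjoint---so $C(u)$ is a bona fide single‑error‑correcting code with $M(u)$ clouds and some number $F(u)$ of free points. Now fix $v\in V$ and look at the messages $m$ with $(u(m),v)\in E$; there are exactly $\sum_{u:\,(u,v)\in E}M(u)$ of them, and for each the pair $(v,x^{(2)}(m,v))$ lies in $B(m)$. Disjointness of clouds forces the words $x^{(2)}(m,v)$ over these $m$ to be pairwise distinct (otherwise two such $B(m)$ overlap in the fibre over $v$) and forces each $x^{(2)}(m,v)$ to lie in no cloud of $C(v)$, i.e. to be a free point of $C(v)$: if $x^{(2)}(m,v)\in D(m')$ for some $m'\in S_v$ then $(v,x^{(2)}(m,v))\in B(m)\cap B(m')$, and $m\neq m'$ because $u(m)\neq v=u(m')$. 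Counting these distinct free points yields $\sum_{u:\,(u,v)\in E}M(u)\le F(v)$.

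Conversely, I would build the strategy directly from the codes $C(u)$. Partition $[M]$ into blocks $S_u$ of size $M(u)$ and, for $m\in S_u$, let $x^{(2)}(m,u)$ be the codeword that $m$ is assigned in $C(u)$ (so $D(m)$ is its cloud). For each $v\in V$, the messages $m$ with $(u(m),v)\in E$ number at most $F(v)$ by~\eqref{constraints}, so choose an injection of them into the free points of $C(v)$ and let $x^{(2)}(m,v)$ be the image. This is well defined: for fixed $v$ the messages of $S_v$ (whose second‑block words are codewords of $C(v)$) are disjoint from those with $(u(m),v)\in E$ (whose words are free points of $C(v)$, since $u(m)\neq v$ as $H$ has no loops), and a codeword of $C(v)$ is never a free point. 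To verify single‑error correction, suppose $(v,w)$ is received. If $w$ lies in a cloud of $C(v)$ it is not a free point, so the only message consistent with $(v,w)$ is the unique $m\in S_v$ whose $C(v)$‑cloud contains $w$; if $w$ is a free point of $C(v)$, the only candidates are the $m$ with $(u(m),v)\in E$ and $x^{(2)}(m,v)=w$, of which there is at most one by injectivity. In every case at most one message is consistent with the received word, so the clouds $B(m)$ are pairwise disjoint and the strategy transmits $M=\sum_u M(u)$ messages.

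The bookkeeping over which pairs $(v,w)$ can occur is routine; the only real subtlety---exactly what~\eqref{constraints} records---is that the ``error‑in‑block‑1'' images $x^{(2)}(m,v)$ must simultaneously be pairwise distinct and land among the free points of $C(v)$, while the ``no‑error‑in‑block‑1'' words are forced to be precisely the codewords of $C(v)$. Thus $C(v)$ carries a double role: a code for the messages whose first block is $v$, and a reservoir of landing spots for messages whose first block is an $H$‑neighbour of $v$ corrupted into $v$. Once this is made explicit both directions fall out, and the only data ever used from the length‑$n_2$ codes is the pair $(M(u),F(u))$---which is why $F$‑optimal codes are the natural building blocks.
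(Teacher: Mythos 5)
Your proposal is correct and follows essentially the same route as the paper's proof: the codes $C(u)$ are read off from the no-error-in-block-one behaviour, and the free points of $C(v)$ serve as landing spots for messages whose first block was corrupted into $v$, with \eqref{constraints} as the exact matching condition in both directions. You merely spell out the disjointness/injectivity bookkeeping that the paper leaves implicit.
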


\begin{proof}
Let us describe an arbitrary coding strategy. First, a sequence $\u$ of length
$n_1$ is transmitted, which corresponds to a vertex $u$ in the left-hand part $U$ of
the graph $H$. Let the number of messages such that transmitting them begins with the
sequence $\u$ be $M(u)$. Consider the case of no error in the first $n_1$
symbols. On the remaining $n_2$ symbols, we need to transmit $M(u)$ different
messages, and one error may occur. Therefore, we need to use a
single-error-correcting code $\mathcal{C}(u)$ of length $n_2$ and cardinality $M(u)$.
Denote by $F(u)$ the number of free points of $C(u)$.

If there was an error in the first $n_1$ symbols and instead of $\u$ a sequence
$\v$ was received, to transmit $M(u)$ messages that start with $\u$ we need
$M(u)$ points, which must be free points of $\mathcal{C}(v)$.

Thus, for each message $\v$ there should exist a code $\mathcal{C}(v)$ with free
points distributed among the sequences $\u$ from which the sequence $\v$
can be reached, and every such sequence~$\u$ must get at least $M(u)$ free
points, which is possible if and only if condition \eqref{constraints} is satisfied.

Now we describe the decoding algorithm. Let a sequence of the first~$n_1$ received
symbols correspond to a vertex $v\in V$; denote the sequence of the last $n_2$
symbols by $\a$. If $\a$ is not a free point of~$\mathcal{C}(v)$, this
means that an error occurred in the second part of the message. In this case, the
second part of the message corresponds to the center of the sphere to which the point
$\a$ belongs.{\parfillskip=0pt\par}

If the sequence $\a$ is a free point of $\mathcal{C}(v)$, then it corresponds to
some sequence $\u$ from which $\v$ can be obtained. Then precisely this
sequence $\u$ has been transmitted at the first encoding stage. The second part
of the message is recovered based on which point $\a$ out of at least $M(u)$
points corresponding to $\u$ was used.
\end{proof}

We are not aware of any efficient (polynomial in the code length) method to find an
optimal family of codes satisfying \eqref{constraints}. However, even choosing
identical codes for all sequences $\u\in\mathcal{X}^{n_1}$ can provide a good
result, as is shown in Corollary~\ref{cor::BSC with one feedback}.

For an example showing that choosing identical codes for a binary channel is not
optimal, consider the case of $n_1=2$ and $n_2=1$. When identical codes are chosen,
the maximum number of messages is always divisible by $2^{n_1}$, and in this case it
is~0, since even adaptively, no more than two messages can be transmitted on length
3. If we choose two different codes, we can transmit two messages.

The following statement for a binary symmetric channel will be used below to
construct an optimal strategy with two feedbacks.

\begin{corollary}\label{cor::BSC with one feedback}
Let $n_2=2^k-1$, $n_1=n-n_2$, $k\ge 1$. Then in a symmetric channel with a
single error and one-time feedback we can transmit
$$
M_1(n)=2^{n_1} \left\lfloor \frac{2^{n_2}}{n_1+n_2+1} \right\rfloor
$$
messages.
\end{corollary}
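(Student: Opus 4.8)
The plan is to apply Theorem \ref{th::main} with the same single-error-correcting code $C$ chosen for every vertex $u\in U$. First I would observe that on the binary symmetric channel of length $n_2=2^k-1$, the Hamming code gives a perfect single-error-correcting code with $2^{n_2}/(n_2+1)=2^{n_2}/2^k$ codewords and \emph{zero} free points; however, zero free points is useless here, since by \eqref{constraints} we would need $F(v)\ge\sum_{u:(u,v)\in E}M(u)$, and the degree of every $v\in V$ is $n_1$ (one error in the first block can flip any of the $n_1$ positions, and we must also leave room for the no-error sequence). So instead I would take a code $C$ of length $n_2$ with $M(u)=m:=\lfloor 2^{n_2}/(n_1+n_2+1)\rfloor$ codewords, chosen so that the number of free points is $F(u)=2^{n_2}-(n_2+1)m$. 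Such a code exists: greedily place $m$ disjoint Hamming balls of radius $1$ (each of size $n_2+1$) in $\{0,1\}^{n_2}$, which is possible as long as $(n_2+1)m\le 2^{n_2}$, i.e. $m\le 2^{n_2}/(n_2+1)$, which holds since $n_1+n_2+1\ge n_2+1$.

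Next I would verify the constraint \eqref{constraints}. With identical codes, the left side is $\sum_{u:(u,v)\in E}M(u)=(\deg v)\,m$. In the graph $H$ for a binary symmetric channel, a single error flips exactly one of the $n_1$ coordinates, so each $v\in U\sqcup V$ that equals some transmitted sequence is adjacent to exactly $n_1$ left-vertices $u$ (the $n_1$ single-flip neighbors) — note $H$ does not connect a vertex to itself, so the "no-error" case is handled separately by requiring $\a$ to lie in the code $C(v)$ itself, which is already accounted for. Hence we need $n_1 m\le F(v)=2^{n_2}-(n_2+1)m$, i.e. $(n_1+n_2+1)m\le 2^{n_2}$, which is exactly the defining inequality for $m=\lfloor 2^{n_2}/(n_1+n_2+1)\rfloor$. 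Then Theorem \ref{th::main} yields a strategy transmitting $M=\sum_{u\in U}M(u)=2^{n_1}m=2^{n_1}\lfloor 2^{n_2}/(n_1+n_2+1)\rfloor$ messages, as claimed.

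The main obstacle, and the point deserving the most care, is the bookkeeping of what "free points" must cover at a received vertex $v$: one must be sure that the $n_2$-suffix space at $v$ simultaneously (i) contains the code $C(v)$ used when the first block is error-free, and (ii) still has $n_1 m$ free points to host the messages arriving from the $n_1$ erroneous neighbors. Since $F(v)=2^{n_2}-(n_2+1)m$ counts precisely the points of $\{0,1\}^{n_2}$ outside the $m$ Hamming balls of $C(v)$, requirement (i) is automatic and (ii) is exactly inequality \eqref{constraints}; so once the code $C$ with the stated parameters is exhibited, the rest is immediate from the theorem. One should also note the boundary hypothesis $k\ge 1$ guarantees $n_2\ge 1$ and that a nontrivial Hamming-ball packing exists, and that when the floor evaluates to $0$ the statement is vacuously true.
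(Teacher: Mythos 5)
Your overall route is the same as the paper's: use identical codes at every vertex $u$, take $m=\bigl\lfloor 2^{n_2}/(n_1+n_2+1)\bigr\rfloor$ codewords whose radius-$1$ balls are disjoint so that $F(u)=2^{n_2}-(n_2+1)m$, note that each $v$ has exactly $n_1$ neighbours in $H$, and reduce the constraint \eqref{constraints} to $(n_1+n_2+1)m\le 2^{n_2}$, which holds by the definition of $m$. The paper phrases this as ``take the Hamming code of length $n_2=2^k-1$ and delete $x=\bigl\lceil n_1 2^{n_2-k}/(n_1+2^k)\bigr\rceil$ codewords,'' which is the same construction with $m=2^{n_2-k}-x$; so conceptually you have reproduced the intended proof.

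The one step that is not sound as written is the existence argument for the code: you claim that $m$ disjoint Hamming balls of radius $1$ can be placed ``greedily'' in $\{0,1\}^{n_2}$ whenever the volume bound $(n_2+1)m\le 2^{n_2}$ holds. That implication is false in general: for instance, for length $4$ the volume bound allows $m=3$ (since $5\cdot 3\le 16$), but no three pairwise disjoint radius-$1$ balls exist, the maximum single-error-correcting code of length $4$ having only $2$ codewords. The hypothesis $n_2=2^k-1$ is in the statement precisely to close this gap: for these lengths the Hamming code is a perfect packing with $2^{n_2-k}=2^{n_2}/(n_2+1)$ balls, and since $m\le 2^{n_2}/(n_2+1)$ you may simply take any $m$ of its codewords (equivalently, delete codewords from it, as the paper does). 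With that replacement for the greedy argument, your proof is complete and coincides with the paper's.
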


\begin{proof}
To each point $\u$, assign as $\mathcal{C}(u)$ the Hamming code of length $n_2$
with $x$ codewords deleted. We choose $x$ so that to satisfy the constraints
\eqref{constraints}. This is equivalent to the inequality
$$
x2^k\ge n_1 (2^{n_2-k}-x),
$$
whence we obtain
$$
x\ge\frac{n_12^{n_2-k}}{n_1+2^k}.
$$
Then we may take $x=\Bigl\lceil\frac{n_12^{n_2-k}}{n_1+2^k}\Bigr\rceil$. The number
of remaining words in the chosen codes is
$$
2^{n_2-k}-\left\lceil \frac{n_12^{n_2-k}}{n_1+2^k}\right\rceil= \left\lfloor
\frac{2^{n_2}}{n_1+2^k} \right\rfloor
=
\left\lfloor \frac{2^{n_2}}{n_1+n_2+1} \right\rfloor
$$

The total number of transmitted messages is
$$
M_1(n)=2^{n_1} \left\lfloor \frac{2^{n_2}}{n_1+n_2+1} \right\rfloor.
$$
\end{proof}

\section{Binary Symmetric Channel with Feedback}\label{sec::bsc with feedback}

Denote by $Alg_k(n)$ a transmission strategy in a channel of length~$n$ with a
single error and $k$ feedbacks. Now we describe an algorithm to construct a strategy
$Alg_k(n)$ given $Alg_{k-1}(n-1)$, which will be used in what follows.

\textbf{DADA (Double and Delete Algorithm) algorithm for constructing a strategy
\kern-2.1pt$Alg_k(n)$ given $Alg_{k-1}(n-1)$.}

Recall that every cloud in a binary symmetric channel of length $n-1$ contains a root
message and $n-1$ additional messages that coincide with the root in the first $i-1$
symbols and differ from it in the $i$th symbol, $i=1,2,\ldots, n-1$. From each cloud
of messages of~length $n-1$, we construct two sets of messages of length~$n$ by
adding to each message the prefix~0 for the first set and 1 for the second. To make a
cloud on length $n$ from the first (second) set, it~suffices to add any message
beginning with 1 (0). We will refer to such sets as incomplete clouds. Next, from
each free point we make two free points by adding a prefix 0 or 1. We use up all
available free sequences to turn some number of incomplete clouds into clouds. If the
number of incomplete clouds is not greater than the number of free sequences of
length $n$, by the end of this procedure we will have $2M(n-1)$ clouds and some
number of free points, where $M(n-1)$ is the number of messages transmitted by the
$Alg_{k-1}(n-1)$ algorithm. In this case, the DADA algorithm is completed.

Otherwise, by the end of this procedure we will have some number of clouds and some
number of incomplete clouds completely covering the space.

After that, we will take one incomplete cloud of sequences beginning with 1 and one
incomplete cloud of sequences beginning with 0 and eliminate them by turning all
their elements into free points. This operation yields $2n$ free points. Then free
points are used to turn incomplete clouds into clouds. The operation is repeated
until the incomplete clouds are over.

At the end of the procedure, there remain an even number of clouds and at most $2n$
free points. If the number of free points is $2n$, this means that these free points
have been obtained just now from two incomplete clouds. Then we reconstruct one of
these incomplete clouds back and turn it into a cloud by adding one free point. As a
result, we obtain an additional cloud.

Thus, we have proved the following.

\begin{theorem}\label{th::plus1construction}
Assume that on length $n-1$ we have constructed $M(n-1)$ clouds for transmitting
messages with a single error and  $k-1$ feedbacks, $k=1,\ldots, n-1$. Let
$$
U(n)=2\left\lfloor\frac{2^n}{2(n+1)}\right\rfloor,\qquad r(n)=2^n-(n+1)U(n).
$$
Then the DADA algorithm constructs a strategy\/ $Alg_k(n)$ transmitting $M(n)$
messages, where
$$
M(n)=
\begin{cases}
2M(n-1) & \text{if\/}\ 2M(n-1)\le\frac{2^n}{n+1},\\[5pt] U(n) & \text{if\/}\
2M(n-1)> \frac{2^n}{n+1}\ \text{and\/}\ r(n)<2n,\\[5pt] U(n)+1 & \text{if\/}\
2M(n-1)> \frac{2^n}{n+1}\ \text{and\/}\ r(n)\ge 2n.
\end{cases}
$$
\end{theorem}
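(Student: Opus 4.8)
The plan is to verify that the DADA algorithm, as described in the running text immediately before the statement, terminates correctly and produces exactly the claimed count $M(n)$ in each of the three cases. The statement is essentially a bookkeeping summary of the procedure, so the proof should carefully track how many clouds and how many free points exist at each stage. First I would fix notation: starting from the $M(n-1)$ clouds on length $n-1$ provided by $Alg_{k-1}(n-1)$, the first phase of DADA produces $2M(n-1)$ incomplete clouds (each of size $n$, since a length-$(n-1)$ cloud has $n$ elements and we prepend a fixed bit) and $2F(n-1)$ genuine free points, where $F(n-1)=2^{n-1}-n\,M(n-1)$ is the number of free points on length $n-1$. Completing an incomplete cloud into a full cloud of length $n$ costs one free point (we add a single opposite-prefix sequence), so after using all $2F(n-1)$ free points we can complete up to $2F(n-1)$ incomplete clouds.

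Next I would split into the stated cases. \emph{Case 1:} if $2M(n-1)\le 2F(n-1)$, i.e. $2M(n-1)\le 2^n/(n+1)$ after rearranging $2M(n-1)\le 2(2^{n-1}-nM(n-1))$ into $(n+1)\cdot 2M(n-1)\le 2^n$, then all incomplete clouds become full clouds and we are left with exactly $2M(n-1)$ clouds plus some leftover free points; this gives $M(n)=2M(n-1)$. \emph{Case 2 and 3:} if $2M(n-1)>2^n/(n+1)$, the free points run out before all incomplete clouds are completed, so the space is entirely tiled by full clouds and leftover incomplete clouds. Now the deletion step kicks in: destroying one incomplete cloud beginning with $1$ and one beginning with $0$ liberates $2n$ free points, which complete $2n$ further incomplete clouds; this is repeated. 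I would argue by an invariant that the number of clouds stays constant modulo this shuffling except for the net loss of $2$ per deletion round offset by the $2n$ completions — actually the cleanest accounting is simply that the \emph{total} number of clouds at the end equals the largest number $2\lfloor 2^n/(2(n+1))\rfloor = U(n)$ of disjoint clouds of size $n+1$ that fit, because every free point gets absorbed except a residue of at most $2n$. Here one must check the residue is genuinely $r(n)=2^n-(n+1)U(n)<2n$ in Case 2, yielding $M(n)=U(n)$, whereas if $r(n)\ge 2n$ the final reconstruction of one incomplete cloud using one residual free point yields one extra cloud, so $M(n)=U(n)+1$.

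The main obstacle I anticipate is making the parity and residue bookkeeping in the deletion phase fully rigorous: one must show (i) that incomplete clouds of both prefixes $0$ and $1$ are available in equal enough supply to be deleted in pairs until exhausted — this uses the symmetry of the doubling construction, which produces incomplete clouds in matched $0$/$1$ pairs — and (ii) that the process terminates with at most $2n$ free points and that this leftover has the exact value $r(n)$. The identity $r(n)=2^n-(n+1)U(n)$ and the bound $0\le r(n)<2(n+1)$ follow from the definition of $U(n)$ as $2\lfloor 2^n/(2(n+1))\rfloor$; the subtlety is the finer dichotomy $r(n)<2n$ versus $2n\le r(n)<2(n+1)$, which is precisely what distinguishes whether the final reconstruction step can be carried out. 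Since the running text has already walked through each of these manipulations, the remaining work is to phrase them as a clean case analysis and confirm the arithmetic of $U(n)$ and $r(n)$; I would present it as: "By the description of DADA above, \dots" and then discharge the three cases in turn, so that the theorem follows as stated.

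\begin{proof}
We follow the operation of the DADA algorithm described above and count clouds and free points at each stage. On length $n-1$ we are given a strategy with $M(n-1)$ clouds, each consisting of $n$ sequences, so the number of free points on length $n-1$ is
$$
F(n-1)=2^{n-1}-n\,M(n-1).
$$

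In the first phase, each length-$(n-1)$ cloud yields two incomplete clouds on length $n$ (prepending $0$ and prepending $1$), giving $2M(n-1)$ incomplete clouds, each of size $n$; and each free point yields two free points, giving $2F(n-1)$ free points. Completing an incomplete cloud into a full cloud of length $n+1$ costs exactly one free point.

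\emph{Case 1: $2M(n-1)\le 2^n/(n+1)$.} Rearranging, $(n+1)\,2M(n-1)\le 2^n$, i.e. $2M(n-1)\le 2(2^{n-1}-n\,M(n-1))=2F(n-1)$. Hence all $2M(n-1)$ incomplete clouds can be completed, leaving $2M(n-1)$ full clouds and $2F(n-1)-2M(n-1)\ge 0$ free points. The algorithm halts with $M(n)=2M(n-1)$ clouds.

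\emph{Case 2 and 3: $2M(n-1)> 2^n/(n+1)$.} Now $2F(n-1)<2M(n-1)$, so after the first phase the free points are exhausted and the space is covered by some full clouds together with the remaining incomplete clouds, the latter occurring in matched $0$-prefix / $1$-prefix pairs by construction. The algorithm then repeatedly deletes one incomplete cloud of each prefix, freeing $2n$ sequences, and uses them to complete up to $2n$ further incomplete clouds. Each such round changes the number of full clouds by $+2n$ from completions and does not touch the already-completed clouds, while removing $2$ incomplete clouds directly and $2n$ more by completion, so it decreases the number of incomplete clouds by $2n+2$. The process continues until no incomplete clouds remain. At termination the space is partitioned into full clouds of size $n+1$ and at most $2n$ free points, since at most one final batch of $2n$ freed sequences can fail to be absorbed. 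Writing $U(n)=2\lfloor 2^n/(2(n+1))\rfloor$ and $r(n)=2^n-(n+1)U(n)$, we have $0\le r(n)<2(n+1)$ and $r(n)$ is even, and the number of full clouds produced equals $U(n)$ with $r(n)$ leftover sequences.

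If $r(n)<2n$, these $r(n)$ leftover sequences cannot be used to reconstruct a deleted incomplete cloud (which needs $n$ sequences to rebuild and then one more to complete, but more to the point the algorithm halts), and $M(n)=U(n)$. If $2n\le r(n)<2(n+1)$, then the $r(n)\ge 2n$ leftover sequences are exactly those freed in the last deletion round; we reconstruct one of the two deleted incomplete clouds (using $n$ of the leftover sequences) and complete it into a full cloud using one more leftover sequence, since $r(n)\ge n+1$. This yields one additional cloud, so $M(n)=U(n)+1$.

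Combining the three cases gives the claimed formula for $M(n)$.
\end{proof}
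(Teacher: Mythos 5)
Your proof is correct and takes essentially the same route as the paper: the paper's own ``proof'' of this theorem is just the DADA description in the preceding text (ending with ``Thus, we have proved the following''), and your argument is that same cloud/free-point bookkeeping made explicit, including the equivalence $2M(n-1)\le 2^n/(n+1) \Leftrightarrow 2M(n-1)\le 2F(n-1)$ with $F(n-1)=2^{n-1}-nM(n-1)$ and the residue dichotomy $r(n)<2n$ versus $r(n)=2n$. The two points you gloss --- that completing an incomplete cloud requires a free point of the \emph{opposite} prefix (so the balanced $0$/$1$ split of free points and incomplete clouds is what makes the pure counting argument valid) and that the final cloud count is even, which is what forces it to equal $U(n)$ rather than a possibly odd larger value --- are treated at the same informal level in the paper itself, so they do not constitute a gap relative to the paper's proof.
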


In the complete feedback case, the optimum number of messages that can be transmitted
with a~single error has been computed in~\cite{P87}. In
Theorem~\ref{th::simpleProofCompleteFeedback} we present a new simpler proof of this
result.{\parfillskip=0pt\par}

\begin{theorem}\label{th::simpleProofCompleteFeedback}
Let
$$
U(n)=2\left\lfloor\frac{2^n}{2(n+1)}\right\rfloor,\qquad r(n)=2^n-(n+1)U(n).
$$
Then one can transmit $M_{\rm ad}(n)$ messages through a channel with a single
error, where
\begin{equation}\label{binary complete feedback}
M_{\rm ad}(n)=
\begin{cases}
U(n) & \text{if\/}\ r(n)<2n,\\ U(n)+1 & \text{if\/}\ r(n)\ge 2n.
\end{cases}
\end{equation}
Moreover, this number of messages is optimal.
\end{theorem}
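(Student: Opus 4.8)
The statement has two halves: achievability ($M_{\rm ad}(n)$ messages can be transmitted) and optimality (no more can be). For achievability, the plan is to induct on $n$ using the DADA algorithm of Theorem~\ref{th::plus1construction}, reading it in the special case of complete feedback, i.e.\ $k=n-1$, where $Alg_k(n)$ is built from $Alg_{k-1}(n-1)=Alg_{n-2}(n-1)$. One starts from a trivial base case (say $n=1$ or $n=2$, transmitting one message with no errors — for $n=1$ the Hamming bound gives $M_{\rm ad}(1)=1$). The inductive step is just to show that the recursion for $M(n)$ in Theorem~\ref{th::plus1construction} always lands on the value $M_{\rm ad}(n)$ given by \eqref{binary complete feedback}; since in the completely adaptive case the hypothesis $2M(n-1)>\frac{2^n}{n+1}$ will hold once $n$ is moderately large (as $M(n-1)=M_{\rm ad}(n-1)$ is already close to the Hamming bound $2^{n-1}/n$), the first branch is used only for the smallest lengths, and the remaining two branches reproduce exactly the two cases of \eqref{binary complete feedback}. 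A short separate check handles the finitely many small $n$ where the doubling branch is active, verifying that $2M_{\rm ad}(n-1)$ still equals $M_{\rm ad}(n)$ there.

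For the converse, the plan is the standard volume/Hamming-type argument for a single-error channel with complete feedback, refined to capture the ``$+1$''. Recall from the discussion in Section~\ref{sec::definitions} that in a binary channel with complete feedback every cloud $B(m)$ must contain at least $n+1$ sequences (a root plus one sequence $\y(i)$ for each coordinate $i$). Hence any strategy transmitting $M$ messages occupies at least $(n+1)M$ of the $2^n$ output sequences, giving $M\le \lfloor 2^n/(n+1)\rfloor$. To get the sharper bound one argues about parity of the number of ``wasted'' (free) sequences: a more careful count shows that $2^n-(n+1)M$ free points cannot be an arbitrarily small nonnegative integer — the combinatorics of how the $\y(i)$ can overlap between clouds forces the free points to come in a structured way, so that $M=U(n)+1$ is attainable only when the leftover budget $r(n)=2^n-(n+1)U(n)$ is at least $2n$, i.e.\ enough to ``almost'' form two more clouds minus the savings. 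This parity/leftover lemma is the heart of the converse and where the bulk of the work lies.

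The main obstacle is precisely this last point: turning the crude Hamming bound into the exact formula with the $r(n)\ge 2n$ threshold. The inequality $M\le \lfloor 2^n/(n+1)\rfloor$ is immediate, but showing that one cannot squeeze in the extra cloud unless $r(n)\ge 2n$ requires understanding the interaction structure of clouds in the adaptive binary setting — essentially that the ``incomplete clouds'' bookkeeping in the DADA construction is tight. I expect the cleanest route is to phrase the converse in the question-game language (Rényi--Ulam with one lie), where a potential/weight argument à la Berlekamp shows that a state with $M$ candidates and $n$ questions left is winnable only if the Berlekamp weight $M(n+1)\le 2^n$, together with a parity refinement handling equality; this reproduces~\eqref{binary complete feedback} and matches the achievability from the first part, completing the proof.
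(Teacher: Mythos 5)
Your achievability half is essentially the paper's own argument: the paper also builds the complete-feedback strategy by induction, feeding $Alg_{n-2}(n-1)$ into the DADA recursion of Theorem~\ref{th::plus1construction}, verifying $n\le 8$ by hand, and showing for $n\ge 9$ that the doubling branch is never active via the explicit estimate $2M_{\rm ad}(n-1)\ge 2^n/n-4>2^n/(n+1)$, so the recursion outputs $U(n)$ or $U(n)+1$ according to $r(n)$. Your sketch of this part is the same plan, only less explicit about the inequality and about which small lengths need hand-checking (the paper never needs your claim that the doubling branch reproduces $M_{\rm ad}(n)$, since all such lengths fall in the range checked directly).

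The gap is in the converse. The paper does not reprove optimality: the sentence preceding the theorem attributes the optimum value to Pelc \cite{P87}, and the proof printed in the paper is the construction only. You instead promise a self-contained converse, but the decisive step is left as a wish. The crude volume bound you state only gives $M\le\lfloor 2^n/(n+1)\rfloor$, and this is genuinely weaker than the theorem: whenever $n+1\le r(n)<2n$ (e.g.\ $n=4$, where $r(4)=6$, the Hamming bound allows $3$ but $M_{\rm ad}(4)=2$) the floor equals $U(n)+1$, so the bound cannot separate the two cases of \eqref{binary complete feedback}. What must be proved is precisely that an odd number $M$ of messages is transmittable only if $2^n\ge (n+1)M+(n-1)$, i.e.\ $r(n)\ge 2n$ for $M=U(n)+1$; this is the nontrivial content of Pelc's theorem, and your two descriptions of it --- ``the combinatorics of how the $\y(i)$ can overlap between clouds forces the free points to come in a structured way'' and ``a parity refinement handling equality'' of the Berlekamp weight argument --- name the lemma rather than prove it. As written, the proposal therefore does not establish the word ``optimal'' in the statement; to close it you must either carry out the Berlekamp-state parity analysis (showing that from an initial state with an odd number of candidates every first question already incurs a weight deficit of at least $n-1$) or do as the paper does and invoke \cite{P87} for the upper bound.
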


\begin{remark}
In fact, $r$ is always even and is less than $2n+2$; therefore, the condition $r\ge
2n$ in the last line of \eqref{binary complete feedback} can be replaced with $r=2n$.
Note that the second case is realized very rarely. Namely, the code cardinality is
$U(n)+1$ for $n=1,2$, and the next length for which this happens is $49\,736$. Thus,
the optimum number of messages is most often the largest even number that does not
exceed the Hamming bound.
\end{remark}

\begin{proof}
We will construct a transmission strategy inductively. For $n\le 8$, the formula can
be verified by hand.

Now assume that for length $n-1$, $n\ge 9$, we have constructed $M_{\rm ad}(n-1)$
clouds. Note that the number of incomplete clouds is at least
$$
2M_{\rm ad}(n-1)\ge\frac{2^n}{n}-4> \frac{2^n}{n+1}
$$
for $n\ge 9$. We use the DADA algorithm to construct an adaptive strategy on length
$n$ from the strategy on length $n-1$. Since
$$
2M_{\rm ad}(n-1)>\frac{2^n}{n+1},
$$
we conclude that $M_{\rm ad}(n)$ equals either $U(n)$ or $U(n)+1$ depending on
$r(n)$, as required.
\end{proof}

\begin{theorem}
In a binary channel with two feedbacks and a single error, $M_{\rm ad}(n)$ messages
can be transmitted, i.e., the same number as for the transmission with
complete feedback.
\end{theorem}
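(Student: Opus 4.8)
The plan is to reduce the statement to a single application of the DADA algorithm. First note that any strategy using two feedbacks is a special case of a complete-feedback strategy (the encoder may simply disregard all but two of the returned symbols), so Theorem~\ref{th::simpleProofCompleteFeedback} already gives $M_{\rm ad}(n)$ as an upper bound; hence it remains only to \emph{construct}, for every $n$, a two-feedback single-error-correcting strategy of length $n$ with $M_{\rm ad}(n)$ clouds. Since DADA (Theorem~\ref{th::plus1construction}) turns a $(k-1)$-feedback length-$(n-1)$ strategy into a $k$-feedback length-$n$ strategy, for $k=2$ it is enough to feed it a one-feedback strategy of length $n-1$ carrying enough messages, and I would take this base strategy from Corollary~\ref{cor::BSC with one feedback}.

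Concretely, I would let $\ell$ be the largest integer with $2^\ell-1\le n-1$, set $n_2=2^\ell-1$ and $n_1=(n-1)-n_2$ (so that $n_1+n_2+1=n$), and invoke Corollary~\ref{cor::BSC with one feedback} to obtain a one-feedback strategy of length $n-1$ with
$$
M=2^{n_1}\left\lfloor\frac{2^{n_2}}{n}\right\rfloor
$$
clouds. Running the DADA algorithm on this strategy produces a two-feedback strategy of length $n$ whose cloud count $M(n)$ is given by the case distinction of Theorem~\ref{th::plus1construction}, and the goal is to land in its second or third case, i.e.\ to verify $2M>\frac{2^n}{n+1}$: then $M(n)$ equals $U(n)$ or $U(n)+1$ according to whether $r(n)<2n$ or $r(n)\ge 2n$, which is exactly $M_{\rm ad}(n)$ by \eqref{binary complete feedback}. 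Cancelling the common factor $2^{n_1+1}$ (note $2^n=2^{n_1+1}\cdot 2^{n_2}$), the inequality $2M>\frac{2^n}{n+1}$ is equivalent to $\lfloor 2^{n_2}/n\rfloor>2^{n_2}/(n+1)$, and, writing $2^{n_2}=qn+r$ with $0\le r<n$, simply to $q>r$.

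A clean sufficient condition for $q>r$ is $2^{n_2}\ge n^2$. Since $2^\ell\le n<2^{\ell+1}$ we have $2^{n_2}=2^{2^\ell-1}\ge 2^{2\ell+2}>n^2$ as soon as $2^\ell-1\ge 2\ell+2$, i.e.\ for all $n\ge 16$; thus for $n\ge16$ the construction above already yields a two-feedback strategy with $M_{\rm ad}(n)$ clouds. For $n\le 3$ the assertion is immediate, since a complete-feedback strategy of length $n$ uses at most $n-1\le 2$ feedbacks. For the finitely many remaining lengths $4\le n\le 15$ I would check the condition $q>r$ by hand, and for the handful of exceptional lengths where the generic one-feedback base code of Corollary~\ref{cor::BSC with one feedback} is a bit too small, replace it by a denser one-feedback code of length $n-1$ built from Theorem~\ref{th::main} with non-identical codes $C(u)$, or just write the length-$n$ strategy down explicitly. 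I expect precisely this last point to be the real obstacle: controlling the loss of the floor $\lfloor 2^{n_2}/n\rfloor$ against $2^{n_2}/(n+1)$ in the regime where $n$ lies just below a power of two, so that the largest admissible block has $n_2\approx n/2$ and $2^{n_2}$ is only of order $2^{n/2}$, which is not automatically far above $n^2$; away from these few small lengths the elementary estimate settles everything.
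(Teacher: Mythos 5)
Your proposal is correct and follows essentially the same route as the paper: apply the DADA algorithm of Theorem~\ref{th::plus1construction} to the one-feedback strategy of Corollary~\ref{cor::BSC with one feedback} on length $n-1$, prove the threshold inequality $2M_1(n-1)>\frac{2^n}{n+1}$ for $n\ge 16$ (the paper's reduction $2^{n_2}\ge(n_1+n_2+1)(n_1+n_2+2)$ is your $q>r$ criterion), and settle the remaining small lengths by direct checking. The exceptional lengths you anticipated are real (e.g.\ $n=13$, where the Corollary base yields only $2\cdot 288$ clouds), and the paper handles them exactly as you suggest, by falling back on the stronger one-feedback values of Table~\ref{table::sym 1fb and ad} (for $n\le 9$ one feedback already attains $M_{\rm ad}(n)$).
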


Note that one-time feedback is not sufficient for that, which is seen from
Table~\ref{table::sym 1fb and ad}.

\begin{table}[h]
\centering\vskip3pt
\begin{tabular}{|l|l|l|l|l|l|l|l|l|l|l|l|l|l|l|}
\hline
$n$      & 3 & 4 & 5 & 6 & 7  & 8  & 9  & 10 & 11  & 12  & 13  & 14   & 15   & 16   \\ \hline
$M_1$    & 2 & 2 & 4 & 8 & 16 & 28 & 50 & 90 & 168 & 312 & 580 & 1088 & 2048 & 3854 \\ \hline
$M_{ad}$ & 2 & 2 & 4 & 8 & 16 & 28 & 50 & 92 & 170 & 314 & 584 & 1092 & 2048 & 3854 \\ \hline
\end{tabular}
\caption{Maximum numbers transmitted messages in a binary symmetric channel with
a single error, with one-time feedback and with complete feedback.}\label{table::sym 1fb and ad}
\end{table}

\begin{proof}
We use the DADA algorithm to construct $Alg_2(n)$ given $Alg_1(n-1)$. For
$Alg_1(n-1)$ we take the strategy constructed in Corollary~\ref{cor::BSC with one
feedback} with
$$
M_1(n-1)=\biggl\lfloor\frac{2^{n_2}}{n_1+n_2+1}\biggr\rfloor 2^{n_1},
$$
where $n_2=2^k-1$ and $n_1=n-1-n_2$.

Notice (see Table~\ref{table::sym 1fb and ad}) that for $n\le 9$ even one-time
feedback is sufficient to transmit the same number of messages as for coding with
complete feedback. Therefore, it suffices to prove the statement for $n\ge 10$. Let
us show that
$$
2M_1(n-1)> \frac{2^n}{n+1}
$$
for $n\ge 10$.

This is equivalent to the inequality
$$
2^{n_1+1}\left\lfloor\frac{2^{n_2}}{n_1+n_2+1}\right\rfloor>
\frac{2^{n_1+n_2+1}}{n_1+n_2+2}.
$$

Reducing by $2^{n_1+1}$ and using the inequality $\lfloor x\rfloor> x-1$, we obtain
$$
\left\lfloor\frac{2^{n_2}}{n_1+n_2+1}\right\rfloor
>\frac{2^{n_2}}{n_1+n_2+1}-1\ge\frac{2^{n_2}}{n_1+n_2+2}.
$$
The latter inequality is equivalent to
$$
2^{n_2}\ge (n_1+n_2+1)(n_1+n_2+2).
$$
Recalling that $n_2\ge n_1$ and $n_2=2^k-1$, we conclude that the inequality holds
for $n_2\ge 15$.

Thus, we have proved the inequality $2M_1(n-1)> \frac{2^n}{n+1}$ for $n\ge 16$. For
$n\in[10, 15]$, the inequality can be checked by hand using Table~\ref{table::sym 1fb
and ad}. Applying Theorems~\ref{th::plus1construction}
and~\ref{th::simpleProofCompleteFeedback} yields the desired result.
\end{proof}

\section{Binary Asymmetric Channel}

In this section we apply the theorems obtained above to the binary asymmetric
channel. To this end, we need to compose tables of codes with many free points. To
find such codes, we use the linear programming method.

In \cite{delsarte1981bounds,klove1981upper,weber1987new}, linear programming was used
to prove upper bounds on the cardinality of nonadaptive codes correcting asymmetric
errors. We modify the methods from those papers to obtain upper bounds on the number
of free points in a code of fixed length and cardinality.

Denote by $M_Z(n,t)$ the maximum cardinality of a $t$-error-correcting asymmetric
code of length~$n$. Also, denote by $L(n,d,w)$ and $U(n,d,w)$ the lower and upper
bounds on the cardinality of a~con\-stant-weight code with weight $w$, length $n$,
and distance $d$.

\begin{theorem}\label{th::linear programming upper bounds}
Let\/ $n\ge 2t\ge 2$, $1\le M\le M_Z(n, t)$. Define
$$
\overline{F}(n, M, t)=\max\Biggl(2^n-\sum\limits_{i=0}^n\Biggl(
z_i\sum\limits_{j=0}^t\binom{i}{i-j}\Biggr)\Biggr),
$$
where the maximum is over all\/ $z_i$ satisfying the following conditions:
\begin{enumerate}
\item
$z_i$ are nonnegative integers;
\item
$z_0=1$, $z_1=z_2=\ldots= z_t=0$;
\item
$
\displaystyle\sum\limits_{i=1}^s\binom{n-w+i}{i}z_{w-i}
+\sum\limits_{j=0}^{t-s}\binom{w+j}{j}z_{w+j}\le\binom{n}{w}
$
for\/ $0\le s\le t<w<n-t$;
\item
$
\displaystyle\sum\limits_{j=s}^rz_jL(r-s, 2t+2, r-j)\le U(n+r-s, 2t+2, r)
$
for\/ $0\le s\le r$;
\item
$
\displaystyle\sum\limits_{j=s}^rz_{n-j}L(r-s, 2t+2, r-j)\le U(n+r-s, 2t+2, r)
$
for\/ $0\le s\le r$;
\item{\sloppy
$
\displaystyle\sum\limits_{i=1}^s\binom{n-w+i}{i}z_{w-i}
+\sum\limits_{j=0}^{t-s}\binom{w+j}{j}z_{w+j}
+\biggl(\binom{w+t-s+1}{w}-\binom{t+1}{t-s+1}
\left\lfloor\frac{w+t-s+1}{t+1}\right\rfloor\biggr)\linebreak \strut\times
z_{w+t-s+1}\le\binom{n}{w}
$
for\/ $0\le s\le t<w<n-t$,\\[2pt]
$
\displaystyle\sum\limits_{i=1}^s\binom{n-w+i}{i}z_{w-i}
+\sum\limits_{j=0}^{t-s}\binom{w+j}{j}z_{w+j}
+\biggl(\binom{n-w+s+1}{s+1}-\binom{t+1}{t-s}
\left\lfloor\frac{n-w+s+1}{t+1}\right\rfloor\biggr)\linebreak \strut\times
z_{w-s-1}\le\binom{n}{w}
$
for\/ $0\le s\le t<w<n-t$;}
\item
$
\displaystyle\sum\limits_{i=0}^nz_i=M.
$
\end{enumerate}
Then the number $F$ of free points in a code of length $n$ and cardinality $M$
correcting~$t$~asymmetric errors is not greater than $\overline{F}(n, M, t)$.
\end{theorem}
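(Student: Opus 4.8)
The plan is to take an arbitrary code $\mathcal{C}$ of length $n$ and cardinality $M$ correcting $t$ asymmetric errors, set $z_i$ to be the number of codewords of $\mathcal{C}$ of weight $i$, and show that $(z_i)$ is a feasible point of the stated linear program whose objective value equals the number $F$ of free points of $\mathcal{C}$; the inequality $F\le\overline{F}(n,M,t)$ then follows. Conditions~(1) and~(7) are immediate. Correcting $t$ asymmetric errors means precisely that the downward balls $B_t(\x)=\{\y:\y\le\x,\ \mathrm{wt}(\x)-\mathrm{wt}(\y)\le t\}$, $\x\in\mathcal{C}$, are pairwise disjoint, and a weight-$i$ codeword satisfies $|B_t(\x)|=\sum_{j=0}^{t}\binom{i}{j}=\sum_{j=0}^{t}\binom{i}{i-j}$; hence the number of covered points equals $\sum_{i=0}^{n}z_i\sum_{j=0}^{t}\binom{i}{i-j}$, so $F$ is exactly the objective. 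For condition~(2) it suffices to prove the bound for an $F$-optimal code, i.e.\ one maximizing $F$ over codes with these parameters --- an arbitrary code's $F$ is then bounded by $F$ of such a code. An $F$-optimal code must contain $\mathbf{0}$: if it had a codeword of weight $w\in[1,t]$, replacing that codeword by $\mathbf{0}$ is legitimate (no other codeword could have weight $\le t$, as two such balls both contain $\mathbf{0}$) and removes the $2^{w}-1\ge 1$ points of its ball other than $\mathbf{0}$ from the covered set; and if no codeword had weight $\le t$, deleting any codeword and adjoining $\mathbf{0}$ likewise shrinks the covered set. So $z_0=1$, and then $z_1=\dots=z_t=0$, because a codeword of weight in $[1,t]$ would have $\mathbf{0}\in B_t(\x)$, contradicting disjointness with $B_t(\mathbf{0})=\{\mathbf{0}\}$.

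It remains to verify the counting inequalities~(3)--(6); these hold for every valid code, in particular for the $F$-optimal one. For~(3), fix $w$ with $t<w<n-t$ and a split $0\le s\le t$, and split the $\binom{n}{w}$ weight-$w$ points into those lying below some codeword of weight in $[w,w+t-s]$ and those lying above some codeword of weight in $[w-s,w-1]$. A point in the first class lies in $B_t(\x)$ for its (by disjointness, unique) dominating codeword $\x$, so this class has exactly $\sum_{j=0}^{t-s}\binom{w+j}{j}z_{w+j}$ points; in the second class the dominated codeword is unique, because two codewords below a common weight-$w$ point with weights in $[w-s,w-1]$ are at asymmetric distance at most $s\le t$, so this class has exactly $\sum_{i=1}^{s}\binom{n-w+i}{i}z_{w-i}$ points. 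The classes are disjoint: $\x_1\le\y\le\x_2$ with $\mathrm{wt}(\x_1)\ge w-s$ and $\mathrm{wt}(\x_2)\le w+t-s$ forces $d_{\mathrm{asym}}(\x_1,\x_2)=\mathrm{wt}(\x_2)-\mathrm{wt}(\x_1)\le t$, which is impossible. Adding the two class sizes gives~(3). Inequality~(6) is this same count sharpened by also charging one codeword of weight $w+t-s+1$ (resp.\ $w-s-1$): I would partition the support of such a codeword $\x$ into $\lfloor(w+t-s+1)/(t+1)\rfloor$ disjoint blocks of size $t+1$ and show, block by block, that $\binom{t+1}{t-s+1}$ of the $\binom{w+t-s+1}{w}$ weight-$w$ points below $\x$ are already counted above, so that the displayed coefficient is a guaranteed genuinely new contribution; the $w-s-1$ statement is the dual one.

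For the extreme-weight inequalities~(4) and~(5) the plan is to inflate each codeword of small weight into many equal-weight words and apply a constant-weight bound. Fix $0\le s\le r$; for each $j\in[s,r]$ fix a constant-weight code $D_j$ of length $r-s$, weight $r-j$, distance $2t+2$ and size $L(r-s,2t+2,r-j)$, and to each codeword $\x$ of weight $j$ attach the words $(\x,d)$, $d\in D_j$, which have length $n+r-s$ and weight $r$. Any two of these words are at Hamming distance at least $2t+2$: words from one codeword, by the distance of $D_j$; words from two codewords of equal weight $j$, already because $d_H(\x,\x')\ge 2(t+1)$ whenever $d_{\mathrm{asym}}(\x,\x')\ge t+1$ and $\mathrm{wt}(\x)=\mathrm{wt}(\x')$; and words from codewords of weights $j<j'$, because then $d_H(\x,\x')\ge 2(t+1)-(j'-j)$ while the attached parts differ by at least $j'-j$ in weight. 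Thus these $\sum_{j=s}^{r}z_jL(r-s,2t+2,r-j)$ words form a constant-weight code of length $n+r-s$, weight $r$ and distance $2t+2$, so their number is at most $U(n+r-s,2t+2,r)$, which is~(4); inequality~(5) is the same applied to the complemented code, which turns weight-$(n-j)$ codewords into weight-$j$ ones and keeps asymmetric distance $\ge t+1$ with the direction of the asymmetry reversed. With~(1)--(7) established, $(z_i)$ is feasible, hence $F=2^n-\sum_{i=0}^{n}z_i\sum_{j=0}^{t}\binom{i}{i-j}\le\overline{F}(n,M,t)$.

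The step I expect to be the main obstacle is the exact bookkeeping behind the floor terms in~(6): one must make sure the $(t+1)$-block argument genuinely produces $\binom{t+1}{t-s+1}\lfloor(w+t-s+1)/(t+1)\rfloor$ weight-$w$ points that are both already accounted for and attributed to distinct blocks, for all admissible $w$ and $s$. By comparison, the $F$-optimal normalization and the Hamming-distance checks in~(4)--(5) are short, and inequalities~(3)--(5) are in essence the classical linear-programming bounds for asymmetric codes re-derived with the number of free points in place of the number of codewords as the functional of interest; the new ingredient of the proof is that the quantity being maximized is $2^n$ minus the covered volume.
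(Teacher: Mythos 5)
Your proposal follows essentially the same route as the paper: set $z_i$ equal to the weight distribution of the code, observe that disjointness of the downward $t$-balls makes $2^n-\sum_{i}z_i\sum_{j=0}^{t}\binom{i}{j}$ exactly the number of free points, enforce condition~2 by passing to an $F$-optimal code (which must contain $\mathbf{0}$ and hence has no codewords of weight in $[1,t]$), and use condition~7 to fix the cardinality. The one real difference is how constraints (3)--(6) are treated: the paper does not prove them at all, but cites Delsarte--Piret, Kl{\o}ve and Weber--de Vroedt--Boekee, where they are established as constraints on weight distributions of $t$-asymmetric-error-correcting codes; you instead re-derive them. Your derivations of (3), (4) and (5) are correct and are in substance the classical arguments (two-sided counting of weight-$w$ words using the asymmetric-distance condition; embedding weight-$j$ codewords into a constant-weight code of length $n+r-s$ and weight $r$ with distance $2t+2$; complementation for (5)), so this buys a more self-contained proof of those constraints. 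For (6) you give only a plan and flag yourself that the bookkeeping behind the floor term $\binom{t+1}{t-s+1}\lfloor (w+t-s+1)/(t+1)\rfloor$ is unverified; as a self-contained argument this step remains open, but it is precisely the refinement proved in the cited Weber--de Vroedt--Boekee paper, so either completing your block argument or simply invoking that reference (as the paper does) closes the proof. A minor remark: your normalization for (2) is a correct and slightly more careful version of the paper's one-line claim, since at most one codeword can have weight $\le t$, and exchanging it for $\mathbf{0}$ (or deleting a codeword and adjoining $\mathbf{0}$) strictly increases the number of free points while preserving cardinality and the error-correcting property.
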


\begin{proof}
Denote by $z_i$, $0\le i\le n$, the number of codewords of weight $i$ in a code of
length~$n$ correcting $t$ asymmetric errors.
In~\cite{delsarte1981bounds,klove1981upper,weber1987new} it was proved that the $z_i$
must satisfy conditions~1 and~3\nobreakdash--6. It is easily seen that a code with
the maximum number of free points must satisfy condition 2. The last condition fixes
the cardinality of a considered code. The maximized expression $\overline{F}(n,M,t)$
corresponds to the number of free points in a code with weight distribution
$\{z_i\}$.
\end{proof}

We also apply the linear programming method to find codes with the maximum number of
free points. Fix a code length $n$, cardinality~$M$, and the number $t$ of
correctable asymmetric errors. Introduce $2^n$ binary variables $x_i$ corresponding
to all possible codewords. For each point~$p$, define the set $D_t(p)$ of codewords
from which this point can be reached as a result of $t$ asymmetric errors. Impose the
constraint $\sum\limits_{i\in D_t(p)}x_i\le 1$. For $t=1$, we will maximize the
number $2^n-\sum\limits_{i=0}^n z_i(i+1)$ of free points, where $z_i$ is the number
of codewords of weight $i$. Note that the number of free points can be expressed
through the variables $x_i$. Add the constraint $\sum x_i=M$ to fix the code
cardinality. Note that any solution to the linear programming problem (if exists)
yields an optimum number of free points for fixed code length and cardinality. To
speed up computations, we have added the constraints from Theorem~\ref{th::linear
programming upper bounds}.

Despite these optimizations, the program operates with $2^n$ variables, so a solution
can be found for small enough values of $n$ only. In Table~\ref{table::optimal number
of free points} we present the parameters of some $F$-optimal codes for $t=1$ and
$n=6$, $7$, $8$, and $9$. Optimal weight distributions are presented in
Table~\ref{table::optimal weight distributions}.

The parameters of the codes of length $n=6$, $8$, and $9$ coincide with the upper
bounds given by Theorem~\ref{th::linear programming upper bounds}. For $n=7$ and
$M=18$, we have obtained 48 free points instead of 49 given by the upper bound of
Theorem~\ref{th::linear programming upper bounds}; i.e., the upper bound of
Theorem~\ref{th::linear programming upper bounds} is not attained. All the other
values coincide with the upper bounds.

Codes with the optimal weight distribution for the lengths $n=7$ and $8$ have been
constructed in~\cite{delsarte1981bounds}. A code for $n=6$ was also previously known.
For the lengths $n=6$ and $8$ and all cardinalities $M\le M_Z(n, 1)$, optimal codes
can be obtained from the code of the maximum cardinality with the weight distribution
given in Table~\ref{table::optimal weight distributions} by deleting $M_Z(n, d)-M$
codewords of the maximum weight. For the length $n=7$ and cardinality $M=17$, we know
two codes with different weight distributions with the optimum number of free points:
$1+0+3+5+5+3+0+0$ and $1+0+3+5+6+1+1+0$. By~deleting codewords of the maximum weight
from the code with the second weight distribution, we obtain $F$-optimal codes for
all $M<17$. However, only the code with the first weight distribution can be
augmented to a code of cardinality 18.

The program works for all $n<9$. For larger lengths, the complexity is too high.
Since the $F$-optimal constructions for the lengths $n=6$ and $8$ are nested codes,
for the length $n=9$ we also restrict our search to such families. This approach
allowed us to find a family of nested codes such that the maximal code of cardinality
62 has the weight distribution presented in Table~\ref{table::optimal weight
distributions}. The number of free points in codes of this family coincides with the
upper bounds of Theorem~\ref{th::linear programming upper bounds} for all
cardinalities $M$. This means that the codes of the constructed family are
$F$-optimal. Note that the code of the maximum cardinality and of length $n=9$
constructed in~\cite{delsarte1981bounds} has 171 free points, whereas in our code
there are 177 free points.

\begin{table}[h]
\renewcommand{\arraystretch}{1.2}
\centering
\begin{tabular}{|c|c|c|c|c|c|c|}
\hline
\multirow{2}{*}{$n=6$}  & Cardinality $M$      & 12  & 11  & 10  & 9   & 8   \\ \cline{2-7} 
                        & Free points $F$ & 16  & 23  & 28  & 33  & 38  \\ \hline
\multirow{2}{*}{$n=7$}  & Cardinality $M$       & 18  & 17  & 16  & 15  & 14  \\ \cline{2-7} 
                        & Free points $F$ & 48  & 56  & 62  & 68  & 73  \\ \hline
\multirow{2}{*}{$n=8$}  & Cardinality $M$       & 36  & 35  & 34  & 33  & 32  \\ \cline{2-7} 
                        & Free points $F$ & 76  & 85  & 92  & 99  & 106 \\ \hline
\multirow{2}{*}{$n=9$}  & Cardinality $M$       & 62  & 61  & 60  & 59  & 58  \\ \cline{2-7} 
                        & Free points $F$ & 177 & 186 & 193 & 200 & 207 \\ \hline
\end{tabular}
\bigskip
\caption{Optimal number of free points for $(n,M,1)$ codes.}\label{table::optimal number of free points}
\end{table}

\begin{table}[h]
\renewcommand{\arraystretch}{1.2}
\centering
\begin{tabular}{|c|c|}
\hline
Length and cardinality & \multicolumn{1}{c|}{Weight Distribution} \\ \hline
$n=6, M=12$      & 1+0+3+4+3+0+1                                             \\ \hline
$n=7, M=18$      & 1+0+3+5+5+3+1+0                                         \\ \hline
$n=7, M=17$      & 1+0+3+5+6+1+1+0                                           \\ \hline
$n=8, M=36$      & 1+0+4+8+10+8+4+0+1                                        \\ \hline
$n=9, M=62$      & 1+0+4+9+17+17+11+2+1+0                                    \\ \hline
\end{tabular}
\bigskip
\caption{Optimal weight distributions.}
\label{table::optimal weight distributions}
\end{table}

Having in our disposal tables of codes with free points, we can apply
Theorem~\ref{th::main} to construct transmission strategies for an asymmetric channel
with feedback. In the case where the parameters $M(v)$ and $F(v)$ depend on a
codeword weight only, we obtain the following.

\begin{corollary}
Let\/ $M(v)=M_w$ and\/ $F(v)=F_w$ for all\/ $v\in V$ such that the number of
symbols\/~$1$ in $v$ is $w$, i.e., $M(v)$ and $F(v)$ depend on the weight of
the codeword\/ $v$ only. If the conditions
\begin{equation}\label{simplified constaint}
(n_1-w) M_{w+1}\le F_w
\end{equation}
are satisfied for all\/ $w\in[0, n_1-1]$, then the number of transmitted messages
is
\begin{equation}\label{number of transmitted words for special case}
M=\sum\limits_{w=0}^{n_1}\binom{n_1}{w}M_w.
\end{equation}
\end{corollary}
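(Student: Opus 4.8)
The plan is to derive this corollary directly from Theorem~\ref{th::main} by specializing to the weight-symmetric case. First I would observe that in the bipartite graph $H=(U\sqcup V,E)$ for a binary asymmetric channel, an edge $(u,v)$ corresponds to the event that $v$ is obtained from $u$ by a single $1\to 0$ error in the first $n_1$ coordinates; hence $v$ has weight $w$ if and only if $u$ has weight $w+1$ and $u$ agrees with $v$ outside the flipped coordinate. Thus for a fixed $v$ of weight $w$, the neighbors $u$ in $U$ are exactly the $n_1-w$ vectors obtained by changing one of the $n_1-w$ zero-coordinates of $v$ to a~$1$, and each such $u$ has weight $w+1$.

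Next I would feed the weight-symmetric data into Theorem~\ref{th::main}: assign to every left vertex $u$ of weight $w'$ the same single-error-correcting code of length $n_2$, cardinality $M_{w'}$, and $F_{w'}$ free points (such a code exists by hypothesis, since $M_{w'}$ and $F_{w'}$ are realized by an actual code), and similarly set $M(u)=M_{w'}$, $F(u)=F_{w'}$. Then condition~\eqref{constraints} for a vertex $v$ of weight $w$ reads
$$
\sum_{u:\,(u,v)\in E} M(u) = (n_1-w)\,M_{w+1}\le F_w = F(v),
$$
which is precisely the assumed inequality~\eqref{simplified constaint} for $w\in[0,n_1-1]$. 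For $w=n_1$ the vertex $v$ has no zero-coordinates in the first block, so it has no neighbors and the constraint is vacuous. Hence all hypotheses of Theorem~\ref{th::main} are met.

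Finally, the number of transmitted messages is $M=\sum_{u\in U}M(u)$ by~\eqref{number of transmitted words}; grouping the $2^{n_1}$ left vertices by weight, there are $\binom{n_1}{w}$ vertices of weight $w$, each contributing $M_w$, which gives
$$
M=\sum_{w=0}^{n_1}\binom{n_1}{w}M_w,
$$
exactly~\eqref{number of transmitted words for special case}. There is no real obstacle here; the only point requiring a moment's care is the direction of the edges (the channel is asymmetric, so neighbors of a weight-$w$ output are weight-$(w+1)$ inputs, not weight-$(w-1)$), and correspondingly that the constraint couples $F_w$ with $M_{w+1}$ rather than $M_{w-1}$ — this is why the hypothesis is stated as $(n_1-w)M_{w+1}\le F_w$.
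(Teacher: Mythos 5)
Your proposal is correct and follows exactly the route the paper intends: the corollary is stated as an immediate specialization of Theorem~\ref{th::main} (the paper gives no separate proof), obtained by assigning one code per weight class, noting that a weight-$w$ output $v$ in the asymmetric error graph has precisely the $n_1-w$ weight-$(w+1)$ neighbors, so that \eqref{constraints} collapses to \eqref{simplified constaint}, and counting $\binom{n_1}{w}$ left vertices of each weight in \eqref{number of transmitted words}. Your explicit check of the edge direction and the vacuous $w=n_1$ case is a welcome bit of care that the paper omits.
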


The numbers of messages transmitted by the algorithms constructed based on
Corollary~2 and Theorem~1 are presented in Table~\ref{table::lower bounds for one
feedback}. To compute the values of $M_w$ and~$F_w$ that give optimal answers, we
have used the dynamic programming technique. In the examples presented below, we give
a detailed description of codes obtained using Corollary~2 and Theorem~1 for the
lengths $n=9$ and $n=8$, respectively.

\begin{table}[h]
\renewcommand{\arraystretch}{1.2}
\begin{center}
\begin{tabular}{|c|c|c|c|c|c|c|c|c|c|c|}
\hline
$n$  & 5 & 6  & 7  & 8  & 9  & 10  & 11  & 12  & 13    
\\ \hline
$M$ (Corollary~2) & 9 & 16 & 29 & 52 & 96 & 177 & 327 & 607 & 1120 
\\ \hline
$M$ (Theorem~1) & 9 & 16 & 29 & 53 & 97 & $\ge 177$ & $\ge 329$ & $\ge 607$ & $\ge 1120$ 
\\ \hline
\end{tabular}
\end{center}
\bigskip

\caption{Number of messages transmitted through an asymmetric
channel with one-time feedback and a single error.}\label{table::lower bounds for one feedback}
\end{table}

We will denote by $(n, M, F, t)_Z$ a nonadaptive code of length $n$ correcting $t$
asymmetric errors and having $M$ codewords and $F$ free points. In the first example
we demonstrate an application of Corollary 2, where the code used after the feedback
depends only on the weight of a codeword transmitted before the feedback. On the
length $n=8$, applying Corollary~2 allows to transmit $52$ messages only. In the
second example we show how one can transmit $53$ messages using Theorem~1. This means
that Corollary~2 does not always give an optimal answer.

\begin{Example}
$n=9$ and $M=96$.

Let $n_1=5$ and $n_2=4$. To vertices that correspond to binary words of weights 0
and~1, we assign a $(4,2,12,1)_Z$ code with two codewords $\{0000, 0011\}$. To
vertices of weights $2$ and $3$, we assign a $(4,3,9,1)_Z$ code with three codewords
$\{0000, 0011, 1100\}$. To weights $4$ and $5$, we assign a $(4,4,4,1)_Z$ code with
four codewords $\{0000, 0011, 1100, 1111\}$.

Let us check the constraints $(n_1-w)M_{w+1}\le F_w$ for $w\in[0, 4]$:
\[
\begin{aligned}
w&=0:\ 5\cdot2\le12,\\ w&=1:\ 4\cdot3\le12,\\ w&=2:\ 3\cdot3\le9,\\ w&=3:\
2\cdot4\le9,\\ w&=4:\ 1\cdot4\le4.
\end{aligned}
\]

Using~\eqref{number of transmitted words for special case}, we compute the number of
transmitted messages:
$$
1\cdot2 +5\cdot2 +10\cdot3 +10\cdot3 +5\cdot4+1\cdot4=96.
$$
\end{Example}

\begin{Example}
$n=8$ and $M=53$.

Let $n_1=6$ and $n_2=2$. To the vertex $111111$ we assign a $(2,2,0,1)_Z$ code $\{00,
11\}$. To the vertices $111000$, $001110$, $010101$, $100011$, $100100$, $010010$,
$001001$, $110000$, $010100$, $001000$, $000010$, and $000001$ we assign a
$(2,0,4,1)_Z$ code with 0 codewords. To all other vertices, we assign a~$(2,1,3,1)_Z$
code with one codeword $\{00\}$. One can easily check that
conditions~\eqref{constraints} of Theorem~\ref{th::main} are fulfilled. For example,
let us check the conditions for the vertex $v=101000$: in total, there are four
vertices from which~$v$ can be reached: $\{111000, 101100, 101010, 101001\}$. The
cardinalities of the corresponding codes are
$$
\begin{gathered}
M(111000)=0,\qquad M(101100)=1,\\ M(101010)=1,\qquad M(101001)=1.
\end{gathered}
$$
The sum of these cardinalities is not greater than $F(101000)=3$; i.e., the
constraint for the vertex $v=101000$ is satisfied. In the same way one can check the
constraints for the other vertices.

The total number of transmitted messages is $2+0+51=53$.
\end{Example}

We present a table with the number of messages that can be transmitted through a
channel with a single asymmetric error and complete feedback. The best results are
obtained in~\cite{cicalese2000optimal}, where transmission of $M=2^m$ messages
requires the length $n=m-1+\lceil \log_2 (m+3)\rceil$. Although we use one-time
feedback only, we can transmit more messages than in~\cite{cicalese2000optimal} for
$n\le 13$, except for the case of $n=7$. For an asymmetric channel with complete
feedback and a single error, one can use an algorithm similar to DADA which allows to
construct codes with the optimal number $M_{\rm ad}$ of transmitted messages.
Cardinalities of these codes are presented in Table~\ref{table::adaptive strategies}.
A detailed description of the construction of such codes will be given in one of
subsequent papers.

\begin{table}[h]
\renewcommand{\arraystretch}{1.2}
\begin{center}
\begin{tabular}{|c|c|c|c|c|c|c|c|c|c|c|}
\hline
$n$  & 5 & 6  & 7  & 8  & 9  & 10  & 11  & 12  & 13    \\ \hline
$M$\cite{cicalese2000optimal} & 8 & 16 & 32 & 32 & 64 & 128 & 256 & 512 & 1024 
\\ \hline
$M_{ad}$ & 11 & 20 & 36 & 66 & 121 & 223 & 415 & 774 & 1452 \\ \hline
\end{tabular}
\end{center}
\bigskip
\caption{Code cardinalities for an asymmetric channel with complete feedback
and a single error.}\label{table::adaptive strategies}
\end{table}

\section*{FUNDING}

The research of I.V. Vorobyev was partially supported by the joint grant of the
Russian Foundation for Basic Research and the National Science Foundation of
Bulgaria, project no.~20-51-18002, Russian Foundation for Basic Research, project
no.~20-01-00559, and BMBF-NEWCOM, grant no.~16KIS1005.

The research of K. Deppe was partially supported by BMBF-NEWCOM, grant no.~16KIS1005,
and BMBF-6G-life, grant no.~16KISK002.

The research of A.V. Lebedev and V.S. Lebedev was partially supported by the joint
grant of the Russian Foundation for Basic Research and the National Science
Foundation of Bulgaria, project no.~20-51-18002.


\begin{thebibliography}{18}

\bibitem{renyi61}
R\'enyi, A., On a Problem of Information Theory, \emph{Magyar Tud.\ Akad.\ Mat.\
Kutat\'o Int.\ K\"ozl.}, 1961, vol.~6, pp.~505--516.

\bibitem{B68}
Berlekamp, E.R., Block Coding for the Binary Symmetric Channel with Noiseless,
Delayless Feedback, \emph{Error-Correcting Codes (Proc.\ Conf.\ Conducted by the
Mathematics Research Center, United States Army, at the University of Wisconsin,
Madison, May~6--8, 1968)}, Mann, H.B., Ed., New York: Wiley, 1969, pp.~61--85.

\bibitem{zigangirov1976}
Zigangirov, K.Sh., On the Number of Correctable Errors for Transmission over a Binary
Symmetrical Channel with Feedback, \emph{Probl.\ Peredachi Inf.}, 1976, vol.~12,
no.~2, pp.~3--19 [\emph{Probl.\ Inf.\ Transm.}\ (Engl.\ Transl.), 1976, vol.~12,
no.~2, pp.~85--97]. 

\bibitem{ulam1991adventures}
Ulam, S.M., \emph{Adventures of a Mathematician}, New York: Scribner, 1976.

\bibitem{P87}
Pelc, A., Solution of Ulam's Problem on Searching with a Lie, \emph{J.~Combin.\
Theory Ser.~A}, 1987, vol.~44, no.~1, pp.~129--140.

\bibitem{G90}
Guzicki, W., Ulam's Searching Game with Two Lies, \emph{J.~Combin.\ Theory Ser.~A},
1990, vol.~54, no.~1, pp.~1--19. 

\bibitem{D00}
Deppe, C., Solution of Ulam's Searching Game with Three Lies or an Optimal Adaptive
Strategy for Binary Three-Error-Correcting Codes, \emph{Discrete Math.}, 2000,
vol.~224, no.~1--3, pp.~79--98. 

\bibitem{D02}
desJardins, D.L., Precise Coding with Noiseless Feedback, \emph{PhD Thesis}, Dept.\
of Mathematics, Univ.\ of California, Berkeley, 2002. Available at

\bibitem{rivest1980coping}
Rivest, R.L., Meyer, A.R., Kleitman, D.J., Winkelmann, K., and Spencer, J., Coping
with Errors in Binary Search Procedures, \emph{J.~Comput.\ System Sci.}, 1980,
vol.~20, no.~3, pp.~396--404. 

\bibitem{Cicalese13}
Cicalese, F., \emph{Fault-Tolerant Search Algorithms: Reliable Computation with
Unreliable Information}, Berlin: Springer, 2013.

\bibitem{cicalese2000optimal}
Cicalese, F. and Mundici, D., Optimal Coding with One Asymmetric Error: Below the
Sphere Packing Bound, \emph{Computing and Combinatorics (Proc.\ 6th Annu.\ Int.\
Conf.\ COCOON 2000, Sydney, Australia, July 26--28, 2000)}, Du, D.Z., Eades, P.,
Estivill-Castro, V., Lin, X., and Sharma, A., Eds., Lect.\ Notes Comput.\ Sci.,
vol.~1858, Berlin: Springer, 2000, pp.~159--169.

\bibitem{dumitriu2004halfliar}
Dumitriu, I. and Spencer, J., A Halfliar's Game, \emph{Theoret., Comput Sci.}, 2004,
vol.~313, no.~3, pp.~353--369. 

\bibitem{spencer2003halflie}
Spencer, J. and Yan, C.H., The Halflie Problem, \emph{J.~Combin.\ Theory Ser.~A},
2003, vol.~103, no.~1, pp.~69--89.

\bibitem{bassalygo2005}
Bassalygo, L.A., Nonbinary Error-Correcting Codes with One-Time Error-Free Feedback,
\emph{Probl.\ Peredachi Inf.}, 2005, vol.~41, no.~2, pp.~63--67 [\emph{Probl.\ Inf.\
Transm.}\ (Engl.\ Transl.), 2005, vol.~41, no.~2, pp.~125--129].

\bibitem{dumitriu2005two}
Dumitriu, I. and Spencer, J., The Two-Batch Liar Game over an Arbitrary Channel,
\emph{SIAM J. Discrete Math.}, 2005, vol.~19, no.~4, pp.~1056--1064.

\bibitem{delsarte1981bounds}
Delsarte, P. and Piret, P., Bounds and Constructions for Binary Asymmetric
Error-Correcting Codes, \emph{IEEE Trans.\ Inform.\ Theory}, 1981, vol.~27, no.~1,
pp.~125--128. 

\bibitem{klove1981upper}
Kl{\o}ve, T., Upper Bounds on Codes Correcting Asymmetric Errors, \emph{IEEE Trans.\
Inform.\ Theory}, 1981, vol.~27, no.~1, pp.~128--131.

\bibitem{weber1987new}
Weber, J., de Vroedt, C., and Boekee, D., New Upper Bounds on the Size of Codes
Correcting Asymmetric Errors, \emph{IEEE Trans.\ Inform.\ Theory}, 1987, vol.~33,
no.~3, pp.~434--437. 

\end{thebibliography}
\end{document}